\newtheorem{theorem}{Theorem}[section]
\newtheorem{lemma}[theorem]{Lemma}
\newtheorem{proposition}[theorem]{Proposition}
\theoremstyle{definition}
\newtheorem{example}[theorem]{Example}
\theoremstyle{remark}
\newtheorem{remark}[theorem]{Remark}
\numberwithin{equation}{section}
\begin{document}
\begin{flushright}
{LMU-ASC 30/17}
\end{flushright}
\vspace{2cm}
\title[Hochschild cohomology of the Weyl algebra]{Hochschild cohomology of the Weyl algebra \\and Vasiliev's equations}

\author{Alexey A. Sharapov}
\address{Physics Faculty, Tomsk State University, Lenin ave. 36, Tomsk 634050, Russia}
\email{sharapov@phys.tsu.ru}
\thanks{The first author was supported in part by RFBR
Grant No. 16-02-00284 A}

\author{Evgeny D. Skvortsov}
\address{Arnold Sommerfeld Center for Theoretical Physics,
Ludwig-Maximilians University Munich,
Theresienstr. 37, D-80333 Munich, Germany}
\address{Lebedev Institute of Physics,
Leninsky ave. 53, 119991 Moscow, Russia}
\email{evgeny.skvortsov@lmu.de}
\thanks{The second author was supported in part by the Russian Science Foundation grant 14-42-00047
in association with Lebedev Physical Institute and by the DFG Transregional Collaborative
Research Centre TRR 33 and the DFG cluster of excellence ``Origin and Structure of the Universe".}

\subjclass[2000]{Primary 16E40; Secondary   70S20}



\keywords{Hochschild cohomology, Weyl algebra, higher-spin theories}

\begin{abstract}
We propose a simple injective resolution for the Hochschild complex of the Weyl algebra. By making use of this resolution, we derive explicit expressions for nontrivial cocycles of the Weyl algebra with coefficients in twisted bimodules as well as for the smash products of the Weyl algebra and a  finite group of linear symplectic transformations.  A relationship with the higher-spin field theory is briefly discussed.
\end{abstract}

\maketitle

\section{Introduction}

The polynomial Weyl algebra $A_n$ is defined to be an associative,
unital algebra over $\mathbb{C}$ on $2n$ generators $y^j$ subject to
the relations
$$
y^j y^k-y^k y^j = 2i\pi^{jk}.
$$
Here $\pi=(\pi^{jk})$ is a nondegenerate, anti-symmetric matrix over $\mathbb{C}$. Bringing the matrix $\pi$ into the canonical form, one can see that
$A_n\simeq A_1^{\otimes n}$.

The Hochschild (co)homology of the Weyl algebra is usually computed employing a Koszul-type resolution, see e.g. \cite{AFLS}, \cite{Kassel},  \cite{Pinczon}.
More precisely, the Koszul complex of the Weyl algebra is defined as a finite subcomplex of the normalized bar-resolution, so that the restriction map induces an isomorphism in (co)homology \cite{Pinczon}. This makes it relatively easy to calculate the dimensions of the cohomology spaces for various coefficients.
Finding explicit expressions for nontrivial cocycles appears to be a much more difficult problem. For example, it had long been known that
$$
HH^\bullet(A_n,A^\ast_n)\simeq HH^{2n}(A_n,A_n^\ast)\simeq \mathbb{C}\,.
$$
(The only nonzero group is dual to the homology group $HH_{2n}(A_{n},A_{n})$ generated by the cycle $1\otimes y^1\wedge y^2\wedge\cdots\wedge y^{2n}$.)
An explicit formula for a nontrivial $2n$-cocycle $\tau_{2n}$, however, remained unknown, until it was obtained in 2005 paper by Feigin, Felder and Shoikhet \cite{FFS} as a consequence of  the Tsygan formality for chains \cite{Shoikhet:2000gw}, \cite{Dolgushev}. The cocycle $\tau_{2n}$ was then  used to define a canonical trace in deformation quantization of symplectic manifolds. An earlier discussion of the analytical structure of $\tau_{2n}$ can be found in \cite[Sec. 3.3]{FeiginTsygan}.

In this paper, we propose a simple injective resolution for the Hochschild complex of the Weyl algebra.
This resolution allows us to derive explicit formulae for nontrivial Hochschild cocycles of the Weyl algebra, much as the Koszul resolution provides us with
information about the dimensions of cohomology spaces. Furthermore, the construction naturally extends to the smash products $A_n\rtimes G$, with $G$ being a finite subgroup of $Sp(n,\mathbb{C})$.  Our interest to the problem is mostly motivated by developments in the higher-spin field theory, see \cite{Vasiliev:1999ba}, \cite{Didenko:2014dwa}, \cite{Sharapov:2017yde} for a review. Let us dwell on this point in more detail.

The nonlinear equations of motion governing the dynamics of massless higher-spin fields were proposed by Vasiliev in the late eighties \cite{Vasiliev:1988sa}.  Unlike what we are accustomed  to in the Yang--Mills theory or gravity, the gauge algebra underlying the massless higher-spin fields is an associative algebra rather than a Lie algebra. In the case of four-dimensional space-time, for example, this algebra, called the higher-spin algebra, is given by the smash product $\mathfrak{A}=A_2 \rtimes G$, where $G$ is a certain finite subgroup of $Sp(2,\mathbb{C})$ (see Example \ref{Ex} below). It turns out that the nontrivial interaction vertices making the field equations  nonlinear are completely determined by nontrivial two-cocycles of the algebra $\mathfrak{A}$ with coefficients in itself. These two-cocycles, in turn, are expressed in terms of the Hochschild cocycle $\tau_2$ of the Weyl algebra $A_1$. An explicit formula for the cocycle $\tau_2$ was thus found in \cite{Vasiliev:1988sa} long before the paper \cite{FFS}.

It should be noted that the original Vasiliev's equations were not written in a closed form, rather they involved an infinite sequence of vertices to be found successively by means of homological perturbation theory.  Later on Vasiliev has been  able to give his equation quite a simple and closed form \cite{Vasiliev:1990en} by embedding the higher-spin algebra $\mathfrak{A}$ into a bigger algebra, called the Vasiliev double \cite{Sharapov:2017yde}.  The initial field equations arise then after excluding the auxiliary fields together with the additional algebra generators. As by product, one obtains an explicit expression for the Hochschild cocycle $\tau_2$ determining the interaction vertices. A closer look at this doubling/exclusion procedure
shows that it implicitly uses the general algebraic concept of resolution. The aim of this paper is to present this resolution in an explicit form, free from all irrelevant field-theoretical details.

\section{Feigin--Felder--Shoikhet Cocycle}
\label{sec:FFS}

In the Introduction, we have defined the Weyl algebra in terms of generators and relations. Alternatively, one can think of the algebra $A_n$ as the space of complex polynomials $\mathbb{C}[y^1,\ldots, y^{2n}]$ endowed with the Moyal $\ast$-product:
$$
a\ast b = a\exp i\left(\frac{\overleftarrow{\partial}}{\partial y^j}\pi^{jk}\frac{\overrightarrow{\partial}}{\partial y^k}\right)b\,.
$$
The Weyl algebra enjoys the involutive automorphism:
\begin{equation}\label{inv}
    a\mapsto \tilde{a}\,,\qquad \tilde{a}(y)=a(-y)\,.
\end{equation}
This allows one to equip $A_n$ with the $\mathbb{Z}_2$-grading: $A_n=A_n^0\oplus A_n^1$, where the even and odd subspaces are generated, respectively,  by the even and odd polynomials in $y$'s.  As a $\mathbb{Z}_2$-graded algebra, $A_n$ admits a supertrace $\mathrm{str}: A_n\rightarrow \mathbb{C}$. By definition,
$$
    \mathrm{str}(a)=a(0)\,,\qquad \mathrm{str}(a\ast b)=(-1)^{|a||b|}\mathrm{str}(b\ast a)\,,
$$
where $|a|$ stands for the parity of $a\in A_n$.
Canonically associated to the supertrace is the supersymmetric bilinear form
\begin{equation}\label{B}
B(a,b)=\mathrm{str}(a\ast b)\,,\qquad B(a,b)=(-1)^{|a||b|}B(b,a)\,.
\end{equation}
The form $B$ is known to be nondegenerate \cite{Pinczon2005}, see also \cite{Konstein:2012xp}. One can check that
\begin{equation}\label{ab}
    B(a, b)=B(\tilde{b},  a)
\end{equation}
and the subspaces $A_n^0$ and $A_n^1$ are orthogonal to each other. Using the nondegenerate bilinear form $B$, we can identify $A_n$ with a subspace in its dual space $A_n^\ast=\mathbb{C}[[y^1,\ldots,y^{2n}]]$. The latter, by definition, consists of the formal power series in $y$'s with complex coefficients.  The natural $A_n$-bimodule structure on $A_n$ induces that on the dual space $A_n^\ast$. It follows form (\ref{ab}) that the left and right actions of $A_n$ on $A_n^\ast$ are given by
$$
    a\mapsto b\ast a\ast \tilde{c}\,,\qquad \forall a\in A_n^\ast,\quad \forall b,c\in A_n\,.
$$
Notice that both the $\ast$-products  are well defined.

Now we turn to the cohomology of the Weyl algebra. Recall that the Hochschild cohomology $HH^\bullet(A,M)$ of an associative unital $k$-algebra $A$  with coefficients in a bimodule $M$ over $A$ is the cohomology of the Hochschild cochain complex \cite{CartEil}
$$
\xymatrix{C^\bullet(A,M) :   &C^0\ar[r]^{\partial}& C^1\ar[r]^{\partial}& C^2\ar[r]^{\partial} &\cdots}
$$
with
$$
C^p=\mathrm{Hom}_{k}(A^{\otimes p}, M)\,,\qquad A^{\otimes p}=\underbrace{A\otimes \cdots \otimes A}_p\,,
$$
and the differential
$$
(\partial f)(a_1,\ldots, a_{p+1})=a_1 f(a_2,\ldots,a_{p+1})+\sum_{k=1}^{p}(-1)^{k+1} f(a_1,\ldots,a_ka_{k+1},\ldots, a_{p+1})
$$
$$
+(-1)^{p+1}f(a_1,\ldots,a_{p})a_{p+1}\,.
$$

The complex $C^
\bullet(A,M)$ contains a large subcomplex $\bar C^\bullet(A,M)$ of cochains that vanish when at least one of their arguments is equal to $1\in A$. The latter is called the {normalized Hochschild complex}. It is easy to see that the inclusion map $i: \bar C(A,M)\rightarrow C(A,M)$ induces an isomorphism in cohomology, meaning that the Hochschild cohomology of $A$ is isomorphic to that of the quotient  algebra  $\bar A=A/\mathbb{C}1$.

Of particular interest are two special cases of modules: $M=A$ and $M=A^\ast$. The cohomology groups $HH^\bullet(A,A)$ have been extensively studied because of their relation to deformation theory, while the groups $HH^\bullet(A,A^\ast)$ are functorial in the $k$-algebra $A$. For the Weyl algebra, we have
\begin{equation}\label{HH0}
HH^\bullet(A_n,A_n)\simeq HH^0(A_n,A_n)\simeq \mathbb{C}\,,
\end{equation}
\begin{equation}\label{HH}
HH^\bullet(A_n,A^\ast_n)\simeq HH^{2n}(A_n,A^\ast_n)\simeq \mathbb{C}\,.
\end{equation}
The group $HH^0(A_n,A_n)$ is identified with the center of the Weyl algebra and the Feigin--Felder--Shoikhet (FFS) cocycle we are interested in generates the other nontrivial group $HH^{2n}(A_n,A^\ast_n)$.

Let $\Delta_{2n}$ be the standard simplex in $\mathbb{R}^{2n}$,
$$
\Delta_{2n} :\quad 0=u_0\leq u_1\leq \cdots\leq u_{2n}\leq 1\,,
$$
and introduce the notation
$$
    p_0^j=i y^j\,,\qquad p_\mu^j=\pi^{jk}\frac{\partial}{\partial y_\mu^k}\,,\qquad \mu=1,\ldots,2n\,.
$$
Then the FFS cocycle can be written as
\begin{equation}\label{FFS}
    \tau_{2n}(a_1,\ldots, a_{2n})=\hat{\tau}_{2n}(p_0, p_1,\ldots,p_{2n}) a_1(y_1)a_2(y_2)\cdots a_{2n}(y_{2n})|_{y_\mu=0}\,,
\end{equation}
where $a_\mu \in A_n$ and
\begin{equation}\label{symb}
\begin{array}{c}
   \hat \tau_{2n}(p_0, p_1,\ldots,p_{2n})= \displaystyle \det(p_1,\ldots,p_{2n})\\[3mm]
   \displaystyle\times\int_{\Delta_{2n}} \exp{i\left[\sum_{0\leq i<j\leq 2n} (1+2u_i-2u_j)\omega(p_i, p_j)\right]}d^{2n}u
   \end{array}
\end{equation}
is the symbol of the polydifferential operator in (\ref{FFS}). Here
$\omega(p,q)=\omega_{ij}p^i q^j$ and $\omega=(\omega_{ij})$ is the
two-form dual to $\pi=(\pi^{ij})$. The exponential function in the
integral is to be expanded in the Taylor series and integrated term
by term. (As the functions $a_\mu$ are polynomial, only finitely
many terms contribute nontrivially to (\ref{FFS}).)

\begin{theorem}[FFS] The cochain $\tau_{2n}$ is a nontrivial cocycle in $\bar C^{2n}(A_n,A^\ast_n)$.

\end{theorem}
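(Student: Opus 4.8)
The plan is to verify three claims: that $\tau_{2n}$ lies in the normalized subcomplex $\bar C^{2n}(A_n,A_n^\ast)$, that it is a cocycle, and that its class in $HH^{2n}(A_n,A_n^\ast)$ is nonzero. The normalization is immediate from the shape of the symbol $(\ref{symb})$: the prefactor $\det(p_1,\ldots,p_{2n})$ is multilinear in $p_1,\ldots,p_{2n}$, so for every $\mu=1,\ldots,2n$ the polydifferential operator $\hat\tau_{2n}(p_0,p_1,\ldots,p_{2n})$ has positive order in the block of variables $y_\mu$; hence $\tau_{2n}(a_1,\ldots,a_{2n})$ vanishes as soon as one of the $a_\mu$ with $\mu\ge1$ is a constant, in particular when $a_\mu=1$, so $\tau_{2n}\in\bar C^{2n}(A_n,A_n^\ast)$.

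For the cocycle identity $\partial\tau_{2n}=0$ I would read the integral in $(\ref{symb})$ as a configuration-space integral and apply Stokes' theorem, which is the mechanism behind all Feigin--Felder--Shoikhet type cocycles. Regard $\Delta_{2n}$ as a fundamental domain for the cyclic configurations of $2n+1$ points $u_0=0\le u_1\le\cdots\le u_{2n}$ on the circle $\mathbb{R}/\mathbb{Z}$ with $u_0$ pinned at the origin, placing the external variable $y=y_0$ at $u_0$ and the argument $a_\mu$ at $u_\mu$; the coefficients $1+2u_i-2u_j=1-2(u_j-u_i)$ are then the values, at the pairs of points, of the sawtooth ``propagator'' $1-2t$. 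Applying $\partial$ adjoins one more argument, hence one more marked point and one more integration variable, and the point is to identify each of the $2n+2$ summands of $\partial\tau_{2n}(a_1,\ldots,a_{2n+1})$ with the integral of one and the same top-degree integrand over a distinct codimension-one face of the enlarged simplex $\Delta_{2n+1}=\{0=u_0\le u_1\le\cdots\le u_{2n+1}\le1\}$. The two faces $u_1=0$ and $u_{2n+1}=1$, on which a marked point reaches the base point, produce the module terms $a_1\ast\tau_{2n}(a_2,\ldots,a_{2n+1})$ and $-\tau_{2n}(a_1,\ldots,a_{2n})\ast\widetilde a_{2n+1}$, the automorphism $a\mapsto\tilde a$ in the right action being precisely the effect of reaching the base point from the opposite side (the propagator to $y_0$ has opposite sign on the two faces); each face $u_k=u_{k+1}$, on which two adjacent marked points collide, produces $(-1)^{k+1}\tau_{2n}(a_1,\ldots,a_k\ast a_{k+1},\ldots,a_{2n+1})$, because the short-distance value of the propagator between the colliding points is the constant $1$, so the factor $\exp\big(i\,\omega(p_k,p_{k+1})\big)$ is the Moyal bidifferential operator producing $a_k\ast a_{k+1}$.

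Granting that these $2n+2$ faces, with their induced orientations, make up the oriented boundary $\partial\Delta_{2n+1}$ and carry precisely the alternating signs $+,\,(-1)^{k+1},\,(-1)^{2n+1}$ of the Hochschild differential, Stokes' theorem gives $\partial\tau_{2n}=\int_{\partial\Delta_{2n+1}}(\text{integrand})=\int_{\Delta_{2n+1}}d(\text{integrand})=0$, the last vanishing because in its configuration-space incarnation the integrand is a product of closed propagator $1$-forms, hence closed (note that the flat measure $d^{2n}u$ is itself, up to a constant, a wedge product $d\phi_{01}\wedge d\phi_{12}\wedge\cdots$ of differentials of propagators). The step I expect to be the genuine obstacle is exactly this one: setting the configuration-space picture up so that Stokes applies, checking that each boundary face reproduces the asserted $\ast$-product or module action \emph{with the correct twist}, and doing the orientation-sign bookkeeping; the rest is routine. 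One may instead sidestep the direct computation by recalling that $\tau_{2n}$ is, by its very construction, a component of the Tsygan formality morphism for Hochschild chains evaluated on the Poisson structure \cite{FFS}, so that, being the image of a cycle under a chain map, it is automatically a cocycle.

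Finally, for nontriviality, I would use that $C^\bullet(A_n,A_n^\ast)$ is the $\mathbb{C}$-linear dual of the Hochschild chain complex $C_\bullet(A_n,A_n)$, with adjoint differentials; hence every normalized coboundary pairs to zero with every cycle, and it suffices to exhibit a normalized cycle on which $\tau_{2n}$ is nonzero. I take the generator $z=1\otimes y^1\wedge\cdots\wedge y^{2n}$ of $HH_{2n}(A_n,A_n)\simeq\mathbb{C}$ recalled in the Introduction and evaluate directly: on linear arguments $a_\mu=y^{j_\mu}$ the prefactor $\det(p_1,\ldots,p_{2n})$ already exhausts one derivative in each $y_\mu$, so the exponential in $(\ref{symb})$ can contribute only its constant term, whence $\tau_{2n}(y^{j_1},\ldots,y^{j_{2n}})=\mathrm{vol}(\Delta_{2n})\,\det(p_1,\ldots,p_{2n})\,y_1^{j_1}\cdots y_{2n}^{j_{2n}}\big|_{y_\mu=0}$, a constant equal in Darboux coordinates to $\tfrac{1}{(2n)!}\,\mathrm{sgn}(\rho)\det\pi$ when $(j_1,\ldots,j_{2n})=(\rho(1),\ldots,\rho(2n))$ and to $0$ otherwise. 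Antisymmetrizing over $(j_1,\ldots,j_{2n})$ and pairing with $1\in A_n$ then yields $\langle\tau_{2n},z\rangle=\det\pi\neq0$ (up to an overall nonzero normalization), so $[\tau_{2n}]\neq0$ in $HH^{2n}(A_n,A_n^\ast)$, which finishes the argument.
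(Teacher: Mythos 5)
Your proposal follows essentially the same route as the paper: the cocycle identity is handled by interpreting \eqref{symb} as an integral whose Hochschild differential becomes a sum over the boundary faces of $\Delta_{2n+1}$, killed by Stokes' theorem (with the detailed face/orientation bookkeeping deferred, exactly as the paper defers it to \cite{FFS}), and nontriviality is established by evaluating on the chain $1\otimes y^1\wedge\cdots\wedge y^{2n}$, where only the constant term of the exponential survives and the pairing with $1$ is nonzero. The one cosmetic difference is that the paper's Remark verifies directly that every normalized coboundary vanishes on this particular chain at $y=0$ (so one need not even use that $c_{2n}$ is a cycle), whereas you invoke the general chain--cochain duality; both yield the same conclusion.
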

The proof can be found in \cite{FFS}. The idea is to represent the
l.h.s. of the cocycle condition
$$
a_0\ast \tau_{2n}(a_1,\ldots,a_{2n})-\tau_{2n}(a_0\ast a_1,\ldots,a_{2n})+\ldots -\tau_{2n}(a_0,\ldots,a_{2n-1})\ast\tilde{a}_{2n}=0
$$
as an integral over the faces of a $(2n+1)$-dimensional simplex and to apply then Stokes' theorem.
One more geometrical interpretation of the cocycle condition is presented in Appendix A.
Evaluating now the cocycle $\tau_{2n}$ on the basis cycle $c_{2n}=1\otimes y^1\wedge \cdots\wedge y^{2n}$, one can find
\begin{equation}\label{nc}
B(1,\tau_{2n}(y^1\wedge \cdots \wedge
y^{2n}))=\tau_{2n}(y^1\wedge\cdots\wedge y^{2n})(0)=(2n!)^{-1}\,.
\end{equation}
This shows that both $\tau_{2n}$ and $c_{2n}$ are nontrivial.

\begin{remark}
The fact that the chain $c_{2n}$ is a Hochschild cycle is not of crucial importance in the proof above.
Indeed, for any normalized cochain $\gamma\in \bar C^{2n-1}(A_n,A_n^\ast)$ one can find
$$
(\partial\gamma)(y^1\wedge\cdots\wedge
y^{2n})=2\sum_{k=1}^{2n}(-1)^{k+1}y^{k} \gamma(y^1\wedge\cdots\wedge\hat{y}^k\wedge\cdots\wedge
y^{2n})\,.
$$
(Due to the involution (\ref{inv}) twisting the right action of $A_n$ on $A_n^\ast$, the Moyal $\ast$-product of $y$'s and $\gamma$ reduces effectively to the ordinary, i.e., commutative
multiplication with overall factor $2$.) If the FFS cocycle were
trivial, i.e., $\tau_{2n}=\partial \gamma$, then we would have
$\tau_{2n}(y^1\wedge \cdots\wedge y^{2n})(0)=0$, which is not the
case as seen from (\ref{nc}).
\end{remark}

The integral over the simplex (\ref{symb}) can be transformed to that over a unit hyper\-cube. An appropriate change of variables is
$$
u_1=t_0t_1\cdots t_{2n-1}\,, \quad u_2=t_0t_1\cdots t_{2n-2}\,, \quad\ldots, \quad u_{2n-1}=t_0t_1\,, \quad u_{2n}=t_0\,,
$$
where $t_k\in [0,1]$. The corresponding Jacobian is equal to
$t_0^{2n-1}t_1^{2n-2}\cdots t_{2n-2}$. For $A_1$ this yields the
following symbol of two-cocycle:
\begin{equation}\label{c2}
    \hat \tau_2(p_1,p_2)=\omega(p_1, p_2)
\end{equation}
    $$
    \times\int_0^1dt_1\int_0^1 dt_0t_0  e^{i[\omega(p_0, p_1)(1-2t_0t_1)+\omega(p_0, p_2)(1-2t_0)+\omega(p_1, p_2)(1-2t_0+2 t_0t_1)]}\,.
$$

In the next section we will derive this integral representation for the FFS cocycle following a systematic procedure.

\section{Vasiliev Resolution}
Let $\Omega^\bullet$ denote the space of exterior differential forms whose  homogeneous elements are given by
\begin{equation}\label{a}
a=a(y,z)_{i_1\cdots i_q}dz^{i_1}\wedge \cdots \wedge dz^{i_q}\in \Omega^q\,.
\end{equation}
Here $a_{i_1\cdots i_q}(y,z)\in \mathbb{C}[y^1,\ldots,y^{2n},z^1,\ldots,z^{2n}]$. The $\mathbb{C}$-linear space $\Omega^\bullet$ can be endowed with the structure of an associative graded algebra with respect to the following $\ast$-product:
\begin{equation}\label{star}
a\ast b =a \exp i \left[\frac{\stackrel{\leftarrow}{\partial}}{\partial y^i}\pi^{ij}\left( \frac{\stackrel{\rightarrow}{\partial}}{\partial y^j}+\frac{\stackrel{\rightarrow}{\partial}}{\partial z^j}\right)\right]b\,,\qquad \forall a,b\in \Omega^\bullet\,.
\end{equation}
In other words, the product combines the exterior product of forms with the Moyal $\ast$-product of $y$'s and $z$'s and the grading is given by the form degree.

The algebra $\Omega^\bullet$ enjoys the involutive automorphism
$$
\tilde{a}(y,z;dz)=a(-y,-z;-dz)\,,\qquad \tilde{\tilde {a}}=a\,,\qquad \widetilde{a\ast b}=\tilde{a}\ast\tilde{b}\,.
$$
Using this automorphism, we can make  the algebra $\Omega^\bullet$ into a bimodule over itself with the following left and right actions:
$$
b\mapsto a\ast b\ast \tilde{c}\qquad \forall a,b,c \in \Omega^\bullet\,.
$$

Denote by $\hat\Omega^\bullet$ the completion of the space $\Omega^\bullet$. The elements of $\hat\Omega^\bullet$ are differential forms (\ref{a}) with coefficients being formal power series in $y$'s and $z$'s. The $\Omega^\bullet$-bimodule structure above extends naturally from the space $\Omega^\bullet$ to its completion $\hat\Omega^\bullet$.
Notice that $\Omega^\bullet$ contains the Weyl algebra $A_n=\mathbb{C}[y^1,\ldots,y^{2n}]$ as the subalgebra of $0$-forms that are independent of $z$'s.
Hence, we can think of $\hat{\Omega}^\bullet$ as a bimodule over $A_n$ as well.
This allows us to define the Hochschild complex  of the Weyl algebra $A_n$ with coefficients in $\hat\Omega^\bullet$. Each
$p$-cochain  $f\in C^p(A_n,\Omega^\bullet)$ is given by a $\mathbb{C}$-linear map $f: A_n^{\otimes p}\rightarrow \hat\Omega^\bullet$ and the action of the Hochschild differential $\partial: C^p\rightarrow C^{p+1}$  is defined by the usual formula
\begin{equation}\label{d-H}
(\partial f)(a_1,\ldots, a_{p+1})=a_1\ast f(a_2,\ldots,a_{p+1})
\end{equation}
$$
+\sum_{k=1}^{p}(-1)^{k+1} f(a_1,\ldots,a_k\ast a_{k+1},\ldots, a_{p+1})+(-1)^{p+1}f(a_1,\ldots,a_{p})\ast\tilde{a}_{p+1}\,.
$$
Clearly, the Hochschild complex $C^\bullet(A_n,\hat\Omega^\bullet)$ contains $C^\bullet(A_n, A_n^\ast)$ as subcomplex.

Now we observe  that the $A_n$-bimodule  $\hat\Omega^\bullet$ is actually a cochain complex with respect to the exterior differential $d:\hat\Omega^q\rightarrow \hat\Omega^{q+1}$. This gives one more differential on the cochain space $C^\bullet(A_n,\hat\Omega^\bullet)$. By definition,
\begin{equation}\label{d}
d:C^p(A_n,\hat \Omega^q)\rightarrow C^p(A_n,\hat \Omega^{q+1})\,,\qquad df=dz^i\wedge \frac{\partial f}{\partial z^i}(-1)^p\,.
\end{equation}
The sign factor $(-1)^p$ ensures that $ d\partial+\partial d=0$. Hence, the differential $d$ makes the Hochschild complex into the bicomplex $C^{\bullet,\bullet}=C^\bullet(A_n,\hat \Omega^\bullet)$ with
$$
\partial : C^{p,q}\rightarrow C^{p+1,q}\,,\qquad d: C^{p,q}\rightarrow C^{p,q+1}\,,\qquad p,q\geq 0\,.
$$
Associated to the bicomplex $C^{\bullet,\bullet}$ is the total complex $C^\bullet$, where $C^m=\bigoplus_{p+q=m} C^{p,q}$ and the differential $D:C^m \rightarrow C^{m+1}$ is given by the sum $D=d+\partial$.

Given the bicomplex $(C^{\bullet,\bullet}, \partial, d)$, we can define the pair of spectral sequences $\{'E^{\bullet,\bullet}_r\}$ and $\{''E^{\bullet,\bullet}_r\}$ converging to the cohomology of the total complex $H^\bullet_D(C)$.  As usual,
$$
'E_2^{p,q}=H^p_\partial H^q_d(C)\,,\qquad ''E_2^{p,q}=H^q_dH^p_\partial (C)\,.
$$
By the Poincar\'e Lemma the cohomology of the differential $d$ is concentrated in degree $0$ and
$$
H_d^0(C)\simeq C^\bullet(A_{n},A_{n}^\ast)\,.
$$
Hence,
\begin{equation}\label{HHH}
'E^{\bullet,\bullet }_2\simeq {} 'E^{\bullet,0}_2=H_\partial^\bullet H_d^0(C)\simeq HH^\bullet (A_{n}, A_n^\ast)\simeq HH^{2n}(A_n,A^\ast_n)\simeq \mathbb{C}\,.
\end{equation}
Here we made use of our knowledge of the Hochschild cohomology groups (\ref{HH}).
Thus,  the first spectral sequence collapses after the first step yielding $'E_2^{\bullet,\bullet}\simeq {}'E_\infty^{\bullet,\bullet}\simeq \mathbb{C}$. In other words,
\begin{equation}\label{HD}
H_D^\bullet(C)\simeq H^{2n}_D(C)\simeq\mathbb{C}
\end{equation}
and the one-dimensional space $H^{2n}_D(C)$ is generated by the FFS cocycle $\tau_{2n}$.

Consider now the second spectral sequence. The zero group $H_\partial^0(C^{\bullet,\bullet})$ is certainly nontrivial. By definition, it is represented by  forms
$\alpha\in \hat\Omega^\bullet$ satisfying  the equation
$$
b\ast \alpha-\alpha\ast \tilde{b}=0\qquad \forall b\in A_n\,.
$$
It is enough to check the last condition only for the generators of $A_n$. We have
\begin{equation}\label{eq}
y^j\ast \alpha+\alpha\ast y^j=0\quad \Leftrightarrow\quad 2y^j \alpha+i\pi^{jk}\frac{\partial \alpha}{\partial z^k}=0\,.
\end{equation}
The general solution to these equations is given by
\begin{equation}\label{aaa}
\alpha=e^{2i\omega(z, y)}g_{i_1\cdots i_q}(y)dz^{i_1}\wedge\cdots\wedge dz^{i_q}\in \hat \Omega^q\,,
\end{equation}
where $g$'s are  arbitrary formal power series in $y$'s. Then it
easy to see that $H^\bullet_dH^0_\partial(C)\simeq
H^{2n}_dH^0_\partial(C)\simeq\mathbb{C}$, where the group
$''E_2^{0,2n}=H^{2n}_dH^0_\partial(C)$ is generated by the
cocycle\footnote{Hint: all the coefficients of the form $d\alpha$,
where $\alpha$ is given by (\ref{aaa}), vanish at $y=0$.}
$$
\zeta=e^{2i\omega(z, y)} dz^{1}\wedge\cdots\wedge dz^{{2n}} \in \hat{\Omega}^{2n}\,.
$$
 As is seen, $\zeta\in C^{0,2n}$ is a $2n$-cocycle of the total complex, $D\zeta=0$. Since the total cohomology is known to be one-dimensional (\ref{HD}), the cocycle $\zeta$ must be cohomologous to $a\tau_{2n}$ for some $a\in \mathbb{C}$.  In other words, there exists $\xi\in C^{2n-1}$ such that
\begin{equation}\label{DFZ}
D\xi=\zeta-a\tau_{2n}\,.
\end{equation}
Expanding $\xi$ in homogeneous components,
$$
\xi=\xi_0+\xi_1+\cdots+\xi_{2n-1}\,,\qquad \xi_k\in C^{2n-1-k,k}\,,
$$
we can rewrite (\ref{DFZ}) as the system of descent  equations
\begin{equation}\label{dF}
\begin{array}{ll}
 d\xi_{2n-1}=\zeta\,,&\\
 d\xi_{2n-2}=-\partial \xi_{2n-1}\,,&\\
 \;\cdots&   \\
 d\xi_0=-\partial \xi_1\,,&\\
 \partial \xi_0=-a\tau_{2n}\,.&
 \end{array}
\end{equation}
In order to solve these equations we introduce the contraction homotopy operator $s:\hat\Omega^q\rightarrow\hat\Omega^{q-1} $ by the relation
$$
s a= \int_0^1dt\,t^{q-1}a(y,tz)_{i_1i_2\cdots i_q}z^{i_1}dz^{i_2}\wedge\cdots\wedge dz^{i_q}
$$
for  $a$ given by Eq. (\ref{a}). We have $sd+ds=p$, where $p:\hat \Omega^\bullet\rightarrow \hat \Omega^0$ is the canonical projection onto the subspace of $0$-forms. Using the operator $s$, we can solve successively all but one equations of  (\ref{dF}) as
$$
\begin{array}{l}
\xi_{2n-1}=s\zeta+d\phi_{2n-2}\,,\\
\xi_{2n-2}=(-s\partial)^2s\zeta+\partial \phi_{2n-2}+d\phi_{2n-3}\,,\\
\;\cdots\\
\xi_1=(-s\partial)^{2n-2}s\zeta+\partial \phi_{1}+d\phi_{0}\,,\\
\xi_0=(-s\partial)^{2n-1}s\zeta+\partial \phi_0-\varphi\,,
\end{array}
$$
for some $\phi_k\in C^{2n-k-2,k}$ and $\varphi\in C^{2n-1}(A_n,A_n^\ast)$ describing the general solution. Substituting $\xi_0$ into the last equation in (\ref{dF}),  we get the equality
\begin{equation}\label{Phi'}
a\tau_{2n}= \tau_{2n}' +\partial \varphi\,, \qquad \tau_{2n}' \equiv\partial(s\partial)^{2n-1}s\zeta\,.
\end{equation}
It remains to note that the cocycle $\tau_{2n}'$ is nontrivial and $a\neq 0$. Indeed, evaluating $\tau_{2n}'$ on the cycle $c_{2n}=1\otimes y^{1}\wedge\cdots\wedge y^{{2n}}$, we find
$$
  B(1,\tau'_{2n}(y^1\wedge \cdots \wedge y^{2n}))=\tau'_{2n}(y^1\wedge\cdots\wedge y^{2n})=(2n!)^{-1}\,,
$$
and hence $a =1$. The last computation is considerably simplified if one writes the Hochschild differential  as the sum $\partial=\partial_1+\partial_2$, where the operators
$\partial_1$ and $\partial_2$ are given by the first and second lines in (\ref{d-H}), respectively. It is easy to see that $\partial_2s+s\partial_2=0$ and $s^2=0$. This allows us to replace the full differential $\partial$ in (\ref{Phi'})  with $\partial_1$. We get
\begin{equation}\label{F''}
    \tau_{2n}' = \partial(s\partial_1)^{2n-1}s\zeta=\partial_1(s\partial_1)^{2n-1}s\zeta=(\partial_1s)^{2n}\zeta\,.
\end{equation}
(By construction, $\tau'_{2n}$ is independent of $z$; hence, we can evaluate it at $z=0$ and this yields the mid equality.) Formally, the cocycle (\ref{F''}) looks  like a coboundary. One should keep in mind, however, that its ``potential'' $\xi_0$ is a $z$-dependent function and not an element of $C^{2n-1}(A_n,A_n^\ast)$.
Expression (\ref{F''}) reproduces the FFS cocycle in the form of iterated integrals over a unit hypercube, c.f. (\ref{c2}).

The derivation above is essentially relied on the following
injective resolution  of the Hoch\-schild complex\footnote{ That is,
an injective resolution of the complex of vector spaces
$C^\bullet(A_n,A_n^\ast)$. Since
$H^\bullet_dH^0_\partial(C)\neq 0$, the exact sequence (\ref{VR})
is by no means a Cartan--Eilenberg resolution \cite[Ch.
XVII]{CartEil} as one might suspect.}:
\begin{equation}\label{VR}
  \xymatrix{ 0\ar[r]& C^\bullet(A_n,A_n^\ast)\ar[r]^{\varepsilon}& C^\bullet(A_n,\hat\Omega^0)\ar[r]^{d}& C^\bullet(A_n,\hat\Omega^1)\ar[r]^-{d} &\cdots}\,,
\end{equation}
with $\varepsilon$ being the natural embedding. For historical reasons explained in the Introduction we call the exact sequence (\ref{VR})  the {\it Vasiliev resolution} of the Hochschild  complex $C^\bullet(A_n,A_n^\ast)$.

\section{Twisted Bimodules and Smash Products}

The canonical generators of the Weyl algebra $A_n$ --
the formal variables $y^i$ -- span a $2n$-dimensional symplectic space
$(V,\omega)$ over $\mathbb{C}$. The symplectic group $Sp(n,\mathbb{C})$ acts on
$V$ by linear canonical transformations preserving $\omega$. The action of each
element $g\in Sp(2n,\mathbb{C})$ naturally extends to an automorphism of
the Weyl algebra $A_n$. We will write $a^g$ for the result of this action on
an element $a\in A_n$.   Let us assume that $g$ is a diagonalizable element of
$Sp(n,\mathbb{C})$. Then,  it is easy to see that $\mathrm{Im}(1-g)|_V$ is a
symplectic subspace of $V$ and we set $2k_g=\mathrm{rank}(1-g)|_V$.

To each automorphism $g$ of the Weyl algebra one can associate the $g$-twisted
bimodule $A_ng$ over $A_n$. This is defined as a linear space $A_n$ endowed with the following action of the Weyl algebra:
$$
a\rightarrow b\ast a\ast c^g\,,\qquad \forall a,b,c\in A_n\,.
$$

The dimensions of the Hochschild cohomology spaces $HH^\bullet(A_n,A_ng)$ were computed by Alev, Farinati, Lambre and Solotar \cite{AFLS} (see also \cite{Pinczon}).

\begin{theorem}[AFLS]\label{AFLS1} Let $g$ be a diagonalizable element  of $Sp(n,\mathbb{C})$ and $2k_g= \mathrm{rank} (1-g)|_V$, then
$$
HH^\bullet(A_n,A_ng)\simeq HH^{2k_g}(A_n,A_ng)\simeq \mathbb{C}\,.
$$
\end{theorem}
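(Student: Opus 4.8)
The plan is to imitate the construction of the Vasiliev resolution from Section 3, but now adapted to the $g$-twisted coefficients. First I would observe that the Koszul-type computation of \cite{AFLS} already tells us the dimensions; our goal here is an independent, resolution-based derivation that also yields explicit cocycles, so the real content is to set up the analogue of the bicomplex $C^{\bullet,\bullet}$ with $A_ng$ in place of $A_n^\ast$. Since $g$ is diagonalizable, write $V = V_g \oplus V_g^\perp$, where $V_g = \ker(1-g)|_V$ is the fixed subspace and $V_g^\perp = \mathrm{Im}(1-g)|_V$ is the $2k_g$-dimensional symplectic complement on which $g$ acts without eigenvalue $1$. Correspondingly $A_n \simeq A_{n-k_g} \otimes A_{k_g}$, and the $g$-twisted bimodule factors as $A_ng \simeq A_{n-k_g} \otimes (A_{k_g} g')$, with $g'$ acting on the second tensor factor with no fixed vectors. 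By a Künneth argument for Hochschild cohomology it suffices to treat the case $V = V_g^\perp$, i.e. $k_g = n$ and $g$ has no eigenvalue $1$; then I must show $HH^\bullet(A_n, A_n g) \simeq HH^{2n}(A_n, A_n g) \simeq \mathbb{C}$.

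For this reduced case I would build a twisted Vasiliev resolution exactly as in (\ref{VR}): introduce the algebra $\hat\Omega^\bullet$ of power-series differential forms in $(y,z)$ with the $\ast$-product (\ref{star}), but now make it an $A_n$-bimodule via $b \mapsto a \ast b \ast c^g$, using the automorphism $g$ (extended to $z$'s by letting it act trivially on $dz$, or by whatever extension makes the bimodule structure associative) in place of the involution $\tilde{(\,\cdot\,)}$. The exterior differential $d$ still anticommutes with the Hochschild differential $\partial_g$, giving a bicomplex, and the Poincaré lemma still gives $H_d^0 \simeq C^\bullet(A_n, A_n g)$, so the first spectral sequence yields $'E_2^{\bullet,\bullet} \simeq HH^\bullet(A_n, A_n g)$. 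The analogue of (\ref{eq}) for $H_\partial^0$ now reads $y^j \ast \alpha - \alpha \ast (y^j)^g = 0$; since $g$ has no eigenvalue $1$, the operator $1 - g$ is invertible on $V$, and I expect the general solution to again be a Gaussian in $z$ times an arbitrary power series in $y$, namely something of the form $\alpha = \exp\!\big( i\, z^i M_{ij} z^j + \cdots \big)\, g_{i_1\cdots i_q}(y)\, dz^{i_1} \wedge \cdots \wedge dz^{i_q}$ for a suitable matrix $M$ built from $(1-g)^{-1}$ and $\omega$. Computing $H_d H_\partial^0$ on solutions of this form — using the same hint as in the footnote, that $d\alpha$ vanishes at $y=0$ — should show the cohomology is concentrated in bidegree $(0,2n)$ and one-dimensional, generated by a twisted analogue $\zeta_g$ of $\zeta$. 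Comparing the two spectral sequences then forces $HH^\bullet(A_n, A_n g) \simeq HH^{2n}(A_n, A_n g) \simeq \mathbb{C}$, and, as in (\ref{F''}), the descent equations produce an explicit representative $\tau_{2n}^g = (\partial_{1,g} s)^{2n} \zeta_g$.

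The main obstacle, I expect, is the nondegeneracy/convergence step: one must check that the Gaussian-type kernel solving the twisted version of (\ref{eq}) is actually well defined (the relevant quadratic form in $z$ must be nonsingular, which is precisely where $\mathrm{rank}(1-g) = 2n$ is used), and that the $\ast$-products appearing in $\partial_g$ and in the homotopy formula $sd + ds = p$ converge on these completed spaces — the twist by $g$ changes the contractions relative to the untwisted case and this has to be tracked carefully. A secondary point requiring care is the Künneth reduction: Hochschild cohomology does commute with tensor products of algebras under finiteness hypotheses, but here the coefficients are twisted bimodules rather than the algebras themselves, so one should either cite the appropriate Künneth theorem for $HH^\bullet(A \otimes B, M \otimes N)$ or verify directly that the bicomplex for $A_n g$ is the tensor product of the bicomplexes for the two factors. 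Everything else — anticommutativity of $d$ and $\partial_g$, the Poincaré lemma, collapse of the first spectral sequence using the AFLS dimension count, and the descent argument — goes through verbatim as in Section 3.
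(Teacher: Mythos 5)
The paper does not actually prove this theorem: it is quoted from \cite{AFLS} (see also \cite{Pinczon}), where the dimensions are computed via a Koszul-type resolution. The Vasiliev resolution of Sections 3--4 is used in the opposite direction: the known dimension count is fed into the \emph{first} spectral sequence to conclude that the total cohomology $H_D^\bullet(C)$ is one-dimensional and concentrated in degree $2k_g$, and only then is the second spectral sequence used to locate an explicit representative $\zeta_g$. Your proposal inverts this logic and tries to extract the dimension count from the second spectral sequence, and that is where the gap lies. Computing $H^\bullet_d H^0_\partial(C)$ gives you only the $p=0$ column of the page $''E_2^{p,q}=H^q_dH^p_\partial(C)$. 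To conclude anything about the abutment you must control the columns with $p>0$, i.e.\ you must prove $HH^p(A_n,\hat\Omega^q_g)=0$ for all $p>0$ and all $q$ (acyclicity of the resolving bimodules). You neither state nor establish this; without it, ``comparing the two spectral sequences'' shows at best that $HH^{2k_g}(A_n,A_ng)\neq 0$ \emph{provided} the class of $\zeta_g$ survives to $E_\infty$ (which the paper checks a posteriori by pairing $\tau_g$ with the explicit cycle $c_g$), and it yields no upper bound on any cohomology group and no vanishing in the other degrees. The footnote attached to (\ref{VR}) is a warning about exactly this point: the sequence is exact as a sequence of complexes of vector spaces but is not a Cartan--Eilenberg resolution, so the hypercohomology does not automatically reduce to $H_d^\bullet H_\partial^0$.

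Two secondary points. First, your K\"unneth reduction leaves behind the untwisted factor $HH^\bullet(A_{n-k_g},A_{n-k_g})$, which must itself be shown to be $\mathbb{C}$ concentrated in degree $0$; that is the case $g=1$ of the very theorem you are proving and suffers from the same gap. Second, the general solution of the twisted invariance equation is not a Gaussian in $z$ times an arbitrary series in $y$: as in (\ref{center}) it is $e^{i\omega_g}h(w)$ with $h$ an arbitrary series in $w=z-(y-z)^g$, a mixture of $y$ and $z$ (for $g=-1$ one gets $w=y$, which is why the untwisted case is misleading here). This does not by itself break the argument, but the ensuing $d$-cohomology computation (the paper's Proposition \ref{prop1}) is genuinely more involved than the footnote hint for the untwisted case suggests.
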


Notice that the isomorphisms (\ref{HH0}) and (\ref{HH}) follow from the AFLS theorem if we put $g=\pm 1|_V$.  The theorem also implies that the odd  cohomology
groups are all zero.

Our aim is to write an explicit formula for a nontrivial $2k_g$-cocycle, which we denote
by  $\tau_{g}$.

To do this, we only need to slightly adapt the Vasiliev resolution (\ref{VR}) to the $g$-twisted bimodules. As for the FFS cocycle,  we introduce the algebra of differential forms $\Omega^\bullet$ with the general element (\ref{a}) and the $\ast$-product (\ref{star}). The latter is obviously invariant under the automorphisms
$$
a^g(y,z;dz)=a(y^g,z^g;dz^g)\,,\qquad g\in Sp(2n,\mathbb{C})\,.
$$
(Here we just identify the linear spaces of $y$'s and $z$'s.) Twisting then the right action of $\Omega^\bullet$ on itself by $g$, we define the $g$-twisted
bimodule $\Omega^\bullet_g$ and its  completion $\hat\Omega^\bullet_g$. The natural embedding $A_n\subset \Omega^\bullet$  allows us to introduce the Hochschild complex $C^\bullet(A_n,\hat\Omega^\bullet_g)$, which is actually a bicomplex with respect to the Hochschild and the exterior differential (\ref{d}).
Writing this bicomplex as the complex of complexes
\begin{equation}\label{TVR}
  \xymatrix{ 0\ar[r]& C^\bullet(A_n,A_ng)\ar[r]^{\varepsilon}& C^\bullet(A_n,\hat\Omega_g^0)\ar[r]^{d}& C^\bullet(A_n,\hat\Omega_g^1)\ar[r]^-{d} &\cdots}
\end{equation}
augmented by $C^\bullet(A_n,A_ng)$, we get a natural generalization of the Vasiliev resolution (\ref{VR}) to the case of twisted bimodules.

Repeating the spectral sequence arguments of the previous Section,
one can see that the cocycle
$\tau_{g}$ is cohomologous (in the total complex) to a cocycle of
$C^{0,2k_g}$. The latter represents a nontrivial class in $H_d^{2k_g}H_\partial^0(C)$.

Recall that the group $H_\partial^0(C)$ is identified with the subspace of invariants of the bimodule $\Omega^\bullet_g$, i.e.,
$$
H^0_\partial(C)\ni a\quad \Leftrightarrow \quad b\ast a-a\ast b^g=0\qquad \forall b\in A_n\,.
$$
The last condition is enough to check only for the generators of $A_n$. This gives a system  of differential equations on $a$ similar to (\ref{eq}).
The general $q$-form satisfying these equations looks like:
\begin{equation}\label{center}
a=e^{i\omega_g}h(w)_{i_1\cdots i_q}dz^{i_1}\wedge\cdots\wedge dz^{i_q} \,,
\end{equation}
where
$$
w=z-(y-z)^g\,,\qquad \omega_g=\omega(z,y+(z-y)^g)\,,
$$
and $h$'s are arbitrary formal power series in $w$'s.

The next step is to calculate the $d$-cohomology in the space of differential forms  (\ref{center}).
We have
\begin{equation}\label{dw}
    d\omega_g=\omega(dz-dz^g, w)\,,
\end{equation}
so that the space $H_\partial^0(C)$ is  invariant under the action of $d$, as it must.
Let us introduce  the two- and $2k_g$-forms
$$
\omega (dz-dz^g,dz-dz^g)\,,\qquad
\varpi_g =\omega (dz-dz^g,dz-dz^g)^{k_g}\,.
$$
The rank of the two-form being $2k_g$, $\varpi_g\neq 0$.

\begin{proposition}\label{prop1}
The group $H_d^\bullet H_\partial^0(C)$ is generated by the $2k_g$-form
$$
\zeta_g=\varpi_g e^{i\omega_g}\,.
$$
\end{proposition}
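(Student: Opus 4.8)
The plan is to compute the cohomology of the complex $\bigl(H_\partial^0(C),d\bigr)$ directly, by means of a symplectic decomposition of $V$ that reduces it to a tensor product of a Poincar\'e-type complex and a Koszul-type complex, and then to read off $\zeta_g$ as the generator. First I decompose $V=V_+\oplus V_-$ into $g$-invariant, mutually symplectically orthogonal subspaces with $V_+=\ker(1-g)$ (on which $g$ acts trivially) and $V_-=\mathrm{Im}(1-g)$ (on which $g=-1$), so that $\dim V_-=2k_g$. Using the $g$-invariance of $\omega$ and the orthogonality $V_+\perp V_-$, one checks that $w$ splits as $w_+=2z_+-y_+$ and $w_-=y_-$, that $\omega_g$ reduces to $2\,\omega(z_-,y_-)$ (its $V_+$-part being $\omega(z_+,z_+)=0$), and that $d\omega_g=2\,\omega(dz_-,y_-)$ by~(\ref{dw}), since $dz-dz^g=(1-g)dz$ is supported on $V_-$. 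Writing a generic form~(\ref{center}) accordingly, the complex factors as a tensor product
$$
\bigl(\hat\Omega^\bullet(V_+),d\bigr)\ \otimes\ \bigl(\{\,e^{2i\omega(z_-,y_-)}h(y_-)_J\,dz_-^{J}\,\},d\bigr).
$$

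The first factor is the de Rham complex of formal power-series forms on $V_+$, whose cohomology, by the formal Poincar\'e Lemma (the argument already used in Section~\ref{sec:FFS}, cf.\ the footnote there), is $\mathbb{C}$ concentrated in degree $0$ and generated by the constant $1$. In the second factor, conjugating away the invertible element $e^{2i\omega(z_-,y_-)}$ turns the differential into the operator $2i\,\omega(dz_-,y_-)\wedge(\cdot)$ on the space $\{\,h(y_-)_J\,dz_-^J\,\}$; the de Rham part drops out because the arguments $w_-=y_-$ are $z$-independent. This is wedging with a $1$-form whose coefficients, up to an invertible linear change, are the coordinates $y_-^1,\dots,y_-^{2k_g}$, which form a regular sequence in $\mathbb{C}[[y_-]]$. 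Hence the complex is the ascending Koszul complex of a regular sequence of length $2k_g$, acyclic except in top degree $2k_g$, where the cohomology is $\mathbb{C}[[y_-]]/(y_-^1,\dots,y_-^{2k_g})\cong\mathbb{C}$, generated by the class of the top form $dz_-^1\wedge\cdots\wedge dz_-^{2k_g}$, i.e.\ (up to a nonzero scalar) by $\varpi_g$.

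By the K\"unneth formula $H_d^\bullet H_\partial^0(C)\cong\mathbb{C}$, concentrated in degree $2k_g$, with generator the product of the two generators; reinstating the conjugating factor this is $\varpi_g\,e^{i\omega_g}=\zeta_g$, the $V_+$-factor contributing nothing since $\varpi_g$ already carries all $2k_g$ of the $V_-$-differentials. (En route one records that $\zeta_g$ is genuinely a $d$-cocycle in $C^{0,2k_g}$: $\varpi_g$ has constant coefficients, so $d\varpi_g=0$, and $\varpi_g\wedge d\omega_g=0$ because $\varpi_g$ is already of top degree in the $dz_-$'s.) The step I expect to require real care is the $V_-$-computation: making precise the reduction to a Koszul complex over the completed ring, confirming that the relevant linear forms constitute a regular sequence there, and checking that it is precisely $\varpi_g$ --- and not merely some nonzero top-degree form --- that represents the surviving class. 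Justifying the tensor-product factorization and the K\"unneth step at the level of completed cochain spaces is a secondary technical point.
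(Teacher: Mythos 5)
There is a genuine gap, and it sits exactly at the step you flagged as needing "real care." You assume that $g$ acts as $-1$ on $V_-=\mathrm{Im}(1-g)$. That is true only when $g$ is an involution. The proposition is stated --- and is needed, e.g.\ for Theorem \ref{AFLS2} and Lemma \ref{lemma} --- for an \emph{arbitrary} diagonalizable $g\in Sp(n,\mathbb{C})$ (in the application, any element of a finite subgroup $G$, which can have any finite order). On $\mathrm{Im}(1-g)$ the eigenvalues of $g$ are merely $\neq 1$; they come in pairs $(\lambda,\lambda^{-1})$ and need not equal $-1$. Everything downstream of that assumption then fails for non-involutive $g$: in general $w'=(1+g)z'-g\,y'$ on $V'=\mathrm{Im}(1-g)$, so $w'$ \emph{does} depend on $z'$, $\omega_g$ does not reduce to $2\,\omega(z',y')$, and --- most importantly --- the induced differential on forms $h(w';dz')$ is not pure wedging. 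It acquires a derivative term $(dz'+dz'^{g})^{\alpha}\wedge\partial/\partial w'^{\alpha}$ in addition to the multiplication term $\omega(dz'-dz'^{g},w')\wedge$, so your complex is not the Koszul complex of a regular sequence and the tensor-product/K\"unneth reduction as you set it up does not apply. (Your argument is fine for $g^2=1$, which happens to cover Example \ref{Ex}, but not the general statement.)

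Apart from this, your route coincides with the paper's: the same splitting $V=\mathrm{Ker}(1-g)\oplus\mathrm{Im}(1-g)$, the same formal Poincar\'e-lemma elimination of the $\mathrm{Ker}(1-g)$ variables, and the same identification of the surviving top-degree class by a Koszul/homotopy computation. The missing idea is how to dispose of the extra derivative term for general $g$: the paper conjugates the full induced differential $d_1$ by $e^{\gamma^{\alpha\beta}\partial^2/\partial w'^{\alpha}\partial w'^{\beta}}$ for a suitable symmetric matrix $\gamma$ (this exists because $\omega|_{V'}$ is nondegenerate), which removes the term $(dz'+dz'^{g})^{\alpha}\wedge\partial/\partial w'^{\alpha}$ and reduces $d_1$ to $d_2=\omega(d\tilde z,w')\wedge$; only then does the Koszul-type computation (done in the paper via the homotopy $\Delta d_2+d_2\Delta=N$) give the generator $d\tilde z^1\wedge\cdots\wedge d\tilde z^{2k_g}$, whence $\zeta_g$ after undoing the conjugations. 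To repair your proof you would need either to reproduce this conjugation step or to refine your decomposition of $V_-$ into $(\lambda,\lambda^{-1})$-eigenplanes and handle the $z'$-dependence of $w'$ explicitly there.
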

\begin{proof}
Since $g$ is assumed to be diagonalizable, we can split the carrier symplectic space   as $V=V'\oplus V''$,
where $V'=\mathrm{Im}(1-g)|_V$ and $V''=\mathrm{Ker}(1-g)|_V$. Hence, each $v\in V$
can be uniquely decomposed as $v=v'+v''$ for $v'\in V'$ and $v''\in V''$. It is clear that $\omega(v',v'')=0$.
Let $y=(y', y'')$ and $z=(z',z'')$ be the sets of generators adapted to the splitting above.
The differential $d$ decomposes into the sum $d=d'+d''$, where $d'$ and $d''$ are exterior differentials
with respect to the variables $z'$ and $z''$.  It follows from (\ref{dw}) that $d''\omega_g=0$.
Applying the standard homotopy operator for the exterior differential $d''$ then shows that the
nontrivial cohomology is nested in the subspace of forms (\ref{center}) that depend neither  on $w''$
nor on $dz''$. In other words, the complex $(H_\partial^0(C), d)$ is homotopic to the complex $(K^\bullet, d')$ of forms
$$
a=e^{i\omega_g}h(w'; dz')\,.
$$
Multiplication by $e^{-i\omega_g}$ isomorphically maps the complex  $K^\bullet$ onto the complex $K_1^\bullet$ of forms $h(w';dz')$ with differential
$$
d_1=(dz'+dz'^g)^\alpha \wedge \frac{\partial}{\partial w'^\alpha}+\omega(d\tilde z,w')\wedge\,.
$$
Here $\tilde z=z'-z'^g$ and the index $\alpha=1,\ldots,2k_g$ labels the coordinates with respect to the $g$-diagonal bases in $V'$.
Since $\omega|_{V_1}$ is nondegenerate, there exists a matrix $\gamma=(\gamma^{\alpha\beta})$ such that
$$
d_1=e^{\gamma^{\alpha\beta}\frac{\partial^2}{\partial w'^\alpha\partial w'^\beta}}\omega(d\tilde z,w')\wedge e^{-\gamma^{\alpha\beta}\frac{\partial^2}{\partial w'^\alpha\partial w'^\beta}}\,.
$$
In other words, the differential $d_1$ is equivalent to $d_2=\omega(d\tilde z,w')\wedge$. The cohomology of the latter can easily be computed. Consider the operator
$$
\Delta=\pi^{\alpha\beta}i_{v_\alpha} L_{u_\beta}
$$
composed of the Lie and internal derivatives. Here $v_\alpha=\partial/\partial \tilde{z}^\alpha$, $u_\beta=\partial /\partial w'^\beta$, and $\pi$ is the bivector dual to the symplectic
form $\omega|_{V'}$.
One can check that  $\Delta d_2+d_2\Delta=N $, where the action of the differential operator $N$ on $q$-forms $h(w'; d\tilde z)$ is given by
$$
N=w'^\alpha \frac{\partial}{\partial w'^\alpha} +2k_g-q\,.
$$
The kernel of the operator $N$ is spanned by the form $\lambda=d\tilde z^1\wedge \cdots \wedge d\tilde z^{2k_g}$. The form $\lambda$ is clearly a nontrivial cocycle of $d_2$ (the coefficients of each $d_2$-exact form vanish at $w'=0$). Applying the inverse equivalence transformations yields then the form $e^{i\omega_g}\lambda$, which may differ from $\zeta_g=\varpi_g e^{i\omega_g}$ only by sign.
\end{proof}

By construction, the form $\zeta_g=\varpi_g e^{i\omega_g}$ is a cocycle of the total complex associated with
the bicomplex $C^\bullet(A_n,\hat\Omega_g^\bullet)$. Therefore, if nontrivial, it must be cohomologous to $\tau_{g}$ up to a normalization constant.
Writing down and solving the descent equations similar to (\ref{dF}), we set
\begin{equation}\label{t}
\tau_{g}=(\partial_1s)^{2k_g}\zeta_g\,.
\end{equation}
Recall that the operator $\partial_1$ is defined by the first line in \eqref{d-H}.

\begin{proposition} The Hochschild cocycle defined by Eq. (\ref{t}) is nontrivial.
\end{proposition}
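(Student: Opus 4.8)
The plan is to establish non-triviality in the same way the FFS cocycle was treated in Section~\ref{sec:FFS}: exhibit a Hochschild cycle $c_g$ on which $\tau_g$ takes a nonzero value, so that $\tau_g$ cannot be a coboundary. By Theorem~\ref{AFLS1} the cohomology group in question is one dimensional (with the completed coefficient module into which $A_ng$ embeds through $B$ exactly as \eqref{HH} is the case $g=-1$ of the AFLS theorem), and $\tau_g$ is a $2k_g$-cocycle by the descent construction leading to \eqref{t}; hence once a single such cycle is produced, $\tau_g$ is automatically a generator.

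First I would write down the cycle. Using the $g$-invariant splitting $V=V'\oplus V''$ with $V'=\mathrm{Im}(1-g)|_V$ symplectic of dimension $2k_g$, fix a symplectic basis $e_1,\ldots,e_{2k_g}$ of $V'$ and put $c_g=1\otimes e_1\wedge\cdots\wedge e_{2k_g}$. This chain is supported on the sub-Weyl algebra $A_{k_g}\subset A_n$ generated by $V'$, on which $g$ acts with no fixed vectors, and a routine manipulation of the Koszul complex underlying the AFLS computation shows that $c_g$ is a Hochschild cycle whose class generates $HH_{2k_g}(A_{k_g},A_{k_g}g)\simeq HH_{2k_g}(A_n,A_ng)\simeq\mathbb{C}$. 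For $g=-1$ this is precisely the cycle $c_{2n}$ of Section~\ref{sec:FFS}.

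Next I would evaluate $\tau_g$ on $c_g$. Unwinding $\tau_g=(\partial_1 s)^{2k_g}\zeta_g$ with $\zeta_g=\varpi_g e^{i\omega_g}$ and $\partial_1$ the first line of \eqref{d-H}, the cochain $\tau_g(e_1\wedge\cdots\wedge e_{2k_g})$ becomes a finite alternating sum of nested expressions $e_{j_1}\ast s(e_{j_2}\ast s(\cdots e_{j_{2k_g}}\ast s\zeta_g\cdots))$, which is a $0$-form; since $\tau_g$ is $z$-independent by construction, its constant term $\tau_g(e_1\wedge\cdots\wedge e_{2k_g})(0)$ computes the pairing of $[\tau_g]$ with $[c_g]$. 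Because each $e_\alpha$ is linear in $y$, the Moyal $\ast$-products with the $e_\alpha$ truncate at first order --- the twisted-bimodule analogue of the collapse used in the Remark following the FFS theorem --- so the sum reduces to an iterated integral over a $2k_g$-cube of exactly the shape of \eqref{c2}. Carrying out this integral should yield a nonzero constant, $\pm(2k_g!)^{-1}$ up to the normalization of $\varpi_g$, in complete parallel with \eqref{nc}; heuristically this is forced because, by Proposition~\ref{prop1}, $\zeta_g$ has been engineered to represent the nontrivial $d$-cohomology class on the invariant forms \eqref{center}, so no cancellation can occur.

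The main obstacle is this final evaluation: one must track the homotopy operator $s$ through the $2k_g$ nested products, carry out the truncated $\ast$-multiplications on linear arguments, and compute the resulting iterated integral, checking that it does not vanish. A secondary point to pin down is the precise form of the generating homology cycle in the $g$-twisted setting and the verification that it is a cycle; but this is a standard Koszul-complex computation, so the genuine work lies in the non-vanishing of $\tau_g(e_1\wedge\cdots\wedge e_{2k_g})(0)$.
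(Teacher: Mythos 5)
Your overall strategy -- pair $\tau_g$ with a generating homology cycle and show the pairing is nonzero -- is exactly the paper's, but your candidate cycle is wrong, and this is a genuine gap rather than a detail. In the $g$-twisted setting the chain $1\otimes e_1\wedge\cdots\wedge e_{2k_g}$ is \emph{not} a Hochschild cycle: after identifying the dual of $A_ng$ with the left-twisted bimodule $\tilde gA_n$, $\tilde g=gg_0$, the boundary of $m\otimes e_1\wedge\cdots\wedge e_{2k_g}$ produces the twisted commutators $m\ast e_j - e_j^{\tilde g}\ast m$, which for $m=1$ equal $(1+g)e_j$ and vanish only when $g=-1$ on $V'$ (the FFS case). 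This is precisely why the paper replaces $1$ by the formal power series
$$
\Theta=e^{-i\sum_{i=1}^{k_g}\left(\frac{1+\lambda_i}{1-\lambda_i}\right)p_iq^i}\,,
$$
chosen so that $\Theta\ast q^i+\lambda_iq^i\ast\Theta=0$ and $\Theta\ast p_i+\lambda_i^{-1}p_i\ast\Theta=0$ in a $g$-diagonal symplectic basis; the correct cycle is $c_g=\Theta\otimes p_1\wedge q^1\wedge\cdots\wedge p_{k_g}\wedge q^{k_g}$. Pairing a cocycle against a chain that is not a cycle proves nothing about (non)triviality, so without this correction your argument does not go through. Note also that the factors $\frac{1+\lambda_i}{1-\lambda_i}$ only make sense because $\lambda_i\neq1$ on $V'$, which is where the splitting $V=V'\oplus V''$ actually enters.

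A secondary weakness: your non-vanishing of the final iterated integral is only argued heuristically (``no cancellation can occur because $\zeta_g$ represents a nontrivial $d$-class''). The nontriviality of $[\zeta_g]$ in $H_d^{2k_g}H_\partial^0(C)$ does not by itself force a particular pairing with a particular cycle to be nonzero; one must actually carry out the evaluation $B(\Theta,\tau_g(p_1\wedge q^1\wedge\cdots\wedge p_{k_g}\wedge q^{k_g}))=(2k_g!)^{-1}$, which the paper asserts as a direct computation. Fixing the cycle and performing (or at least correctly setting up) this evaluation would complete the proof along the intended lines.
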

\begin{proof} We will prove the statement by evaluating the cocycle $\tau_{g}$ on the dual cycle $c_{g}$. To write down the latter we introduce a $g$-diagonal symplectic basis in $V$ such that $y=(q^1,p_1,\ldots q^{n},p_n)$, $\omega(dy,dy)=\sum_{i=1}^n dp_i\wedge dq^i$, and $$y^g=(\lambda_1 q^1,\lambda_1^{-1}p_1,\ldots,\lambda_n q^n,\lambda_n^{-1}p_n)\,,$$
where the eigen values are ordered such that $\lambda_i\neq 1$ for $i=1,\ldots, k_g$ and $\lambda_i=1$ for $i>k_g$.
Given the bilinear form (\ref{B}), we can identify the bimodule dual to $A_ng$ with $\tilde{g}A_n$; here the left-twisting element is given by the product $\tilde{g}=gg_0$, where $g_0$ is the parity automorphism, $y^{g_0}=-y$. Then, it is not hard to check that
$$
c_{g}=\Theta\otimes p_1\wedge q^1\wedge\cdots \wedge p_{k_g}\wedge q^{k_g}\,,\qquad \Theta=e^{-i\sum_{i=1}^{k_g} \left(\frac{1+\lambda_i}{1-\lambda_i}\right)p_iq^i}\,,
$$
is a $2k_g$-cycle of the normalized Hochschild complex $\bar C_\bullet(A_n,\tilde gA_n)$.
The function $\Theta$ is chosen to satisfy the equations
$$
\Theta\ast q^i+\lambda_iq^i\ast\Theta=0\,,\qquad \Theta\ast p_i+\lambda^{-1}_ip_i\ast\Theta=0\,,\qquad i=1,\ldots, k_g\,.
$$
A direct computation shows that
$$
\tau_{g}(c_{g})=B(\Theta, \tau_{g}(p_1\wedge q^1\wedge\cdots\wedge p_{k_g}\wedge q^{k_g}))=(2k_g!)^{-1}\,.
$$
Hence, both $\tau_{g}$ and $c_{g}$ are nontrivial and generate the groups $HH^{2k_g}(A_n,A_ng)$ and $HH_{2k_g}(A_n,\tilde g A_n)$, respectively.
\end{proof}

\begin{remark}\label{R2}
The symplectic group $Sp(n,\mathbb{C})$ acts on $C^\bullet(A_n,A_ng)$ by the cochain transformations
$$
c^h(a_1,\ldots, a_k)=\big(c(a_1^{h^{-1}},\ldots, a_k^{h^{-1}})\big)^h\,,\qquad \forall h\in Sp(n,\mathbb{C})\,.
$$
If we treat the bases cocycles (\ref{t}) as a map taking each diagonalizable element $g\in Sp(n,\mathbb{C})$ to the cochain $\tau_g\in C^\bullet(A_n,A_ng)$, then one can easily check that this map
is equivariant with respect to the adjoint action of $Sp(n,\mathbb{C})$ on itself, namely,
$$
\tau_g^h=\tau_{hgh^{-1}}\,.
$$
Notice also that the element $hgh^{-1}$ is diagonalizable if $g$ was such and $k_g=k_{hgh^{-1}}$. In other words,
the elements of the conjugacy class $[g]$ are characterized by the same number $k_g$.

\end{remark}

Finally, we turn to the smash-product algebras. Let $G$ be a finite subgroup acting by automorphisms on a $\mathbb{C}$-algebra $A$ and let $\mathbb{C}[G]$ be the corresponding group algebra. Recall that the smash product
$A\rtimes G$ is the $\mathbb{C}$-algebra with the underlying space $A\otimes \mathbb{C}[G]$ and the product
$$
(a\otimes g)(b\otimes h)=a\ast b^g\otimes gh\,,\qquad \forall a,b\in A\,,
\quad \forall g,h\in G\,.
$$

The following statement relates the Hochschild cohomology of the smash product $A\rtimes G$ to the $G$-invariant cohomology of the algebra $A$.

\begin{lemma}\label{lemma}
Let $A$ be a $\mathbb{C}$-algebra together with an action of a finite group $G$. Then
$$
HH^\bullet (A\rtimes G, A\rtimes G)\simeq \big(\oplus_{g\in G}HH^\bullet (A, Ag)\big)^G\,.
$$
Here $Ag$ is the bimodule isomorphic to $A$ as a space where the left
action of $A$ is the usual one and the right action is the usual action
twisted by $g$.
\end{lemma}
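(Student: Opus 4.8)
The plan is to identify bimodules over $B:=A\rtimes G$ with $A$-bimodules carrying a compatible $(G\times G)$-action, and then to exploit that $|G|$ is invertible in $\mathbb{C}$ in order to reduce the computation to $A$. The starting point is the chain of ring isomorphisms $B^{\mathrm{op}}\cong A^{\mathrm{op}}\rtimes G$ and hence
$$B^e:=B\otimes B^{\mathrm{op}}\ \cong\ (A\otimes A^{\mathrm{op}})\rtimes(G\times G)\ =\ A^e\rtimes(G\times G),$$
the two copies of $G$ acting on $A$ and on $A^{\mathrm{op}}$ respectively; this is the standard fact that a tensor product of two smash products is again a smash product. Under this identification a $B$-bimodule is the same thing as an $A$-bimodule equipped with commuting (appropriately twisted) left and right $G$-actions. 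In particular, regarded as an $A$-bimodule, $B$ decomposes as $\bigoplus_{g\in G}Ag$, with the residual $(G\times G)$-action permuting the summands.

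First I would invoke, or set up directly, the Grothendieck spectral sequence attached to the factorisation $\mathrm{Hom}_{B^e}(B,-)=(-)^{G\times G}\circ\mathrm{Hom}_{A^e}(B,-)$. Its hypotheses hold essentially for free: $B^e$ is free over $A^e$, so restriction of scalars carries injective $B^e$-modules to injective (in particular $\mathrm{Hom}_{A^e}(B,-)$-acyclic) $A^e$-modules, while over a field of characteristic zero every $\mathbb{C}[G\times G]$-module is $(-)^{G\times G}$-acyclic. One gets
$$E_2^{p,q}=H^p\big(G\times G,\ \mathrm{Ext}^q_{A^e}(B,B)\big)\ \Longrightarrow\ \mathrm{Ext}^{p+q}_{B^e}(B,B)=HH^{p+q}(B,B),$$
and since $H^p(G\times G,-)$ vanishes for $p>0$ the spectral sequence degenerates onto the column $p=0$, giving
$$HH^\bullet(B,B)\ \cong\ \mathrm{Ext}^\bullet_{A^e}(B,B)^{G\times G}\ =\ \Big(\bigoplus_{g,g'\in G}\mathrm{Ext}^\bullet_{A^e}(Ag,Ag')\Big)^{G\times G}.$$

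It then remains to identify this invariant space with $\big(\bigoplus_{g\in G}HH^\bullet(A,Ag)\big)^G$. For this one uses the twist isomorphisms $\mathrm{Ext}^\bullet_{A^e}(Ag,Ag')\cong HH^\bullet(A,Ac)$ with $c=g'g^{-1}$, together with the observation that the $(G\times G)$-action permutes the blocks indexed by pairs $(g,g')$ with orbits in bijection with the conjugacy classes of $G$ (via $(g,g')\mapsto[g'g^{-1}]$) and with the stabiliser of the representative $(e,c)$ isomorphic to the centraliser $C_G(c)$. Hence the $(G\times G)$-invariants equal $\bigoplus_{[c]}HH^\bullet(A,Ac)^{C_G(c)}$, which is exactly the conjugacy-class decomposition of $\big(\bigoplus_{g\in G}HH^\bullet(A,Ag)\big)^G$ for the $G$-action permuting the summands by conjugation $g\mapsto hgh^{-1}$ and simultaneously twisting each $HH^\bullet(A,Ag)$ by the cochain transformation $c\mapsto c^h$ of Remark~\ref{R2}. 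This establishes the isomorphism degree by degree.

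The main obstacle, I expect, is the degeneration/transfer step — making rigorous that invertibility of $|G|$ allows $\mathrm{Ext}_{B^e}(B,-)$ to be replaced by the $(G\times G)$-invariants of $\mathrm{Ext}_{A^e}(B,-)$. A hands-on substitute that I would use as a check is to work directly with the bar complex: the splitting $B^{\otimes n}=A^{\otimes n}\otimes\mathbb{C}[G]^{\otimes n}$ turns $C^\bullet(B,B)$ into a complex assembled from copies of $C^\bullet(A,B)$ indexed by powers of $G$, on which the averaging operator $\tfrac1{|G|}\sum_{h\in G}h\cdot(-)$ realises the passage to $G$-invariants; equivalently, the separability idempotent $\tfrac1{|G|}\sum_{g\in G}(1\otimes g)\otimes_A(1\otimes g^{-1})\in B\otimes_A B$ splits the multiplication map $B\otimes_A B\to B$ and already exhibits $HH^\bullet(B,B)$ as a natural direct summand of $HH^\bullet(A,B)=\bigoplus_{g\in G}HH^\bullet(A,Ag)$, with the $G$-action pinning down which summand. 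Once this is in place, the remaining work — the twist isomorphisms and the orbit/stabiliser bookkeeping — is routine.
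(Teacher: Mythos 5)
The paper does not actually prove this lemma: it simply cites \cite[Lemma 9.3]{Etingof}, so there is no in-house argument to compare against. Your proof is correct and self-contained, and it follows essentially the same strategy as the cited source: identify $B^e=(A\rtimes G)^e$ with $A^e\rtimes(G\times G)$, use that $|G|$ is invertible in $\mathbb{C}$ to strip off the group, and then untwist. The main cosmetic difference is that Etingof passes through the diagonal subgroup $A^e\rtimes G\subset B^e$ and applies Shapiro's lemma, landing directly on $\big(\bigoplus_g\mathrm{Ext}^\bullet_{A^e}(A,Ag)\big)^G$, whereas you keep the full $G\times G$, degenerate the Grothendieck spectral sequence (legitimate: in characteristic zero $H^{p}(G\times G,-)=0$ for $p>0$, and restriction along the free extension $A^e\subset B^e$ preserves injectives), and then recover the same answer by orbit--stabiliser bookkeeping on the blocks $\mathrm{Ext}^\bullet_{A^e}(Ag,Ag')$; your route costs a little more combinatorics but makes the mechanism by which the conjugation action arises completely explicit, which fits well with Remark \ref{R2}. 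The one detail to tidy is the convention in the untwisting step: whether the orbit invariant attached to the pair $(g,g')$ is $g'g^{-1}$ or $g^{-1}g'$ depends on how the right $G$-action on $B$ is normalised; this does not affect the conclusion, since the two elements are conjugate, but it should be fixed consistently with the definition of $Ag$. Your separability-idempotent remark is a sound sanity check (it shows $A\subset B$ is a separable extension, so relative and absolute Hochschild cohomology agree), but as written it only exhibits $HH^\bullet(B,B)$ as a direct summand of $\bigoplus_g HH^\bullet(A,Ag)$ and would still require the $G$-invariance identification to conclude; the spectral-sequence argument is the one that actually closes the proof.
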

For the proof see e.g. \cite[Lemma 9.3]{Etingof}.

We are going to apply this lemma to the case $A=A_n$ and $G$ is a finite subgroup of the symplectic group $Sp(n,\mathbb{C})$.
The group $G$ being finite,  any $g\in G$  satisfies $g^{|G|}=1$; and hence, $g$ is diagonalizable. This allows us to use the above results on the cohomology groups
$HH^\bullet (A_n,A_ng)$. The elements of the direct sum $\oplus_{g\in G}HH^\bullet(A_n,A_ng)$ are  generated by the cocycles
\begin{equation}\label{tg}
\tau_\gamma=\sum_{g\in G}\gamma(g)\tau_g\,,
\end{equation}
where $\tau_g$ are basis cocycles for the nonzero groups $HH^{k_g}(A_n,A_ng)\simeq \mathbb{C}$ and $\gamma: G\rightarrow \mathbb{C}$ is a complex function on $G$.
The $G$-invariance condition implies that $\tau_\gamma^h=\tau_\gamma$ for all $h\in G$. Taking into account Remark \ref{R2}, this is equivalent to the condition $\gamma(hgh^{-1})=
\gamma(g)$, that is, $\gamma: G\rightarrow \mathbb{C}$ is a class function on $G$.
In such a way we arrive at the following theorem proved in \cite{AFLS}.
\begin{theorem}[AFLS] \label{AFLS2}The cohomology space $HH^p(A_n\rtimes G,A_n\rtimes G)$ is naturally isomorphic to the space of conjugation invariant functions on the set $S_p$ of elements $g\in G$ such that $$\mathrm{rank} (1-g)|_V=p\,.$$
\end{theorem}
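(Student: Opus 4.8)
The plan is to assemble the final theorem from three ingredients that are already in place: Lemma~\ref{lemma}, the AFLS dimension count of Theorem~\ref{AFLS1}, and the equivariance of the basis cocycles recorded in Remark~\ref{R2}. No further analytic input on the Weyl algebra will be required.

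First I would apply Lemma~\ref{lemma} with $A=A_n$ and $G$ a finite subgroup of $Sp(n,\mathbb{C})$, obtaining
$$
HH^\bullet(A_n\rtimes G, A_n\rtimes G)\simeq\Big(\bigoplus_{g\in G}HH^\bullet(A_n,A_ng)\Big)^G.
$$
Since $G$ is finite, every $g\in G$ satisfies $g^{|G|}=1$ and is therefore diagonalizable as an element of $Sp(n,\mathbb{C})$, so Theorem~\ref{AFLS1} applies to each summand: $HH^p(A_n,A_ng)$ is one-dimensional precisely when $p=2k_g=\mathrm{rank}(1-g)|_V$ and vanishes otherwise (in particular it vanishes for odd $p$, since $\mathrm{rank}(1-g)|_V$ is always even). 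A generator of the nonzero group is the cocycle $\tau_g=(\partial_1 s)^{2k_g}\zeta_g$ of Eq.~(\ref{t}), which was shown to be nontrivial. Hence, in degree $p$, the space $\bigoplus_{g\in G}HH^p(A_n,A_ng)$ has the basis $\{\tau_g: g\in S_p\}$, so its general element is $\tau_\gamma=\sum_{g\in S_p}\gamma(g)\tau_g$ for an arbitrary function $\gamma\colon S_p\to\mathbb{C}$, exactly as in (\ref{tg}).

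The next step is to make the $G$-action on the right-hand side explicit and impose invariance. Conjugation by $h\in G$ carries the complex $C^\bullet(A_n,A_ng)$ to $C^\bullet(A_n,A_n(hgh^{-1}))$ via the cochain transformation of Remark~\ref{R2}; thus it simultaneously permutes the summands according to $g\mapsto hgh^{-1}$ and acts within them. By the equivariance $\tau_g^h=\tau_{hgh^{-1}}$ of Remark~\ref{R2}, one computes $\tau_\gamma^h=\sum_g\gamma(g)\tau_{hgh^{-1}}=\sum_{g'}\gamma(h^{-1}g'h)\tau_{g'}$, so $\tau_\gamma$ is $G$-invariant if and only if $\gamma(h^{-1}g'h)=\gamma(g')$ for all $h,g'\in G$, i.e. $\gamma$ is conjugation invariant. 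Finally, because $k_g$ depends only on the conjugacy class of $g$ (Remark~\ref{R2}), the set $S_p$ is a union of conjugacy classes, so a conjugation-invariant $\gamma$ supported in degree $p$ is the same thing as a conjugation-invariant function on $S_p$. Restricting the displayed isomorphism to degree $p$ and matching $\tau_\gamma$ with $\gamma$ then yields the stated natural isomorphism $HH^p(A_n\rtimes G,A_n\rtimes G)\simeq\{\text{conjugation-invariant functions on }S_p\}$.

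The one point I would be most careful about is the compatibility between the $G$-action hidden in the superscript on the right-hand side of Lemma~\ref{lemma} and the cochain transformations of Remark~\ref{R2}: one must verify that under the isomorphism of Lemma~\ref{lemma} the natural action of $G$ on the Hochschild complex of $A_n\rtimes G$ (by conjugation with the group-like elements $1\otimes g$) is carried to the simultaneous permutation of summands and cochain action used above. Once this identification is granted — and it is essentially built into the formulations of Lemma~\ref{lemma} and Remark~\ref{R2} — the remainder of the argument is pure bookkeeping with the results already cited, and the proof is complete.
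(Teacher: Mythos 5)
Your proposal is correct and follows essentially the same route as the paper: apply Lemma~\ref{lemma} with $A=A_n$, use the finiteness of $G$ to guarantee diagonalizability so that Theorem~\ref{AFLS1} gives the one-dimensional summands generated by the $\tau_g$, and then use the equivariance $\tau_g^h=\tau_{hgh^{-1}}$ of Remark~\ref{R2} to translate $G$-invariance into the class-function condition on $\gamma$. The compatibility of the $G$-action under the isomorphism of Lemma~\ref{lemma}, which you rightly flag as the delicate point, is likewise taken for granted in the paper's own (sketch of a) proof.
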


When applied to the case under consideration, the isomorphism established by Lemma \ref{lemma} admits a fairly explicit description. Given a $G$-invariant, normalized cocycle (\ref{tg}), define the cocycle
\begin{equation}\label{vtg}
\vartheta_\gamma=\sum_{g\in G}\gamma(g)\vartheta_g  g
\end{equation}
of the Hochschild complex $C^\bullet(A_n\rtimes G, A_n\rtimes G)$ by setting
\begin{equation}\label{tt}
\vartheta_g(a_1g_1\,,\ldots,a_{2k_g}g_{2k_g})=\tau_g(a_1,a^{g_1}_2,a_3^{g_1g_2},\ldots, a^{g_1\cdots g_{2k_g-1}}_{2k_g})\otimes g_1g_2\cdots g_{2k_g}\,,
\end{equation}
for all $a_i\in A_n$ and $g_i\in G$. The reader can check that $\vartheta_\gamma$ is a cocycle indeed. Furthermore, $\vartheta_\gamma=0$ whenever  some of its arguments belongs to $\mathbb{C}[G]$.

One could arrive at the cocycles (\ref{vtg}) starting from the injective resolution
\begin{equation}\label{VR3}
\xymatrix{  C^\bullet({A}_n\rtimes G,{A}_n\rtimes G)\ar[r]^{\varepsilon}&C^\bullet({A}_n\rtimes G,\hat\Omega^\bullet\rtimes G)}\,,
\end{equation}
where the action of $G$ naturally extends from $A_n$ to the differential forms (\ref{a}) and the action of exterior differential in $\hat\Omega^\bullet\rtimes G$ is defined as $d(a\otimes g)=da\otimes g$ for all $a\in\hat\Omega^\bullet$ and $g\in G$.

Considering the pair of spectral sequences canonically associated to the bicomplex $C^\bullet(A_n\rtimes G,\hat\Omega^\bullet\rtimes G)$, one can infer that
\begin{equation}\label{iso}
HH^p({A}_n\rtimes G,{A}_n\rtimes G)\simeq H^p_dH^0_\partial({A}_n\rtimes G,\hat\Omega^\bullet\rtimes G)\,,
\end{equation}
where the groups on the right are generated by the $G$-invariant forms
\begin{equation}\label{zg}
\zeta_\gamma=\sum_{g\in G}\gamma(g)\zeta_gg\,.
\end{equation}
Here the forms $\zeta_g$ are defined by Proposition \ref{prop1} and $\gamma$ is a class function on $G$.  (Note that the forms $\zeta_g$ are equivariant in the sense that $\zeta_g^h=\zeta_{hgh^{-1}}$ for all $h\in G$.) Applying  the operator $(\partial_1 s)^{p}$ to  a $p$-form in (\ref{zg}) yields then the desired $p$-cocycle
$$
\vartheta_\gamma=(\partial_1s)^{p}\zeta_\gamma\,.
$$

\begin{example} \label{Ex}By way of illustration, we apply the above constructions to a particular case of physical interest. Namely, we compute the Hochschild cocycles of the higher-spin algebra \cite{Vasiliev:1986qx} underlying $4d$ higher-spin theories.  The algebra in question is given by the smash product $\mathfrak{A}=A_2\rtimes G$, where
$$G= \mathbb{Z}_2\times \mathbb{Z}_2\subset Sp(1,\mathbb{C})\times Sp(1,\mathbb{C})\subset Sp(2,\mathbb{C})\,.$$

In order to make contact with the notation adopted in the physical literature let us also describe $\mathfrak{A}$ by generators and relations. The Weyl algebra $A_2$ is generated by four complex variables $\{y^\alpha, \bar y^{\dot\alpha}\}$, $\alpha,\dot\alpha=1,2$, with defining relations
$$
    [y^\alpha, y^\beta]=2i\epsilon^{\alpha\beta}\,,\qquad [y^\alpha,\bar y^{\dot\alpha}]=0\,,\qquad  [\bar y^{\dot\alpha},\bar y^{\dot\beta}]=2i\epsilon^{\dot\alpha\dot\beta}\,.
$$
Here $\epsilon$'s are the two-dimensional Levi-Civita symbols.  The complex conjugation acts on $y$'s as $(y^\alpha)^\ast=\bar y^{\dot\alpha}$.  The group $G$ is generated by the pair of symplectic  reflections $\kappa,\bar\kappa$:
$$
    \kappa y^\alpha=-y^\alpha \kappa \,,\qquad \kappa \bar y^{\dot\alpha}=\bar{y}^{\dot\alpha} \kappa\,,
    \qquad \bar\kappa \bar y^{\dot\alpha}=-\bar{y}^{\dot\alpha} \bar\kappa\,,
     \qquad  \bar\kappa y^\alpha=y^\alpha \bar\kappa \,,
$$
$$
\kappa^2=1\,,\qquad \bar\kappa^2=1\,,\qquad \kappa\bar\kappa=\bar\kappa \kappa\,,\qquad \kappa^\ast=\bar\kappa\,.
$$
The generators $\kappa$ and $\bar\kappa$ are usually called the Klein operators. Thus, the general element of the algebra $\mathfrak{A}$ is given by
$$
    a=a_1(y,\bar y)+a_2(y,\bar y)\kappa+a_3(y,\bar y)\bar\kappa +a_4(y,\bar y)\kappa\bar\kappa\,.
$$
The action of $\bar\kappa\kappa$ defines a $\mathbb{Z}_2$-grading  making $\mathfrak{A}$ into a superalgebra.

In view of the isomorphism $Sp(1,\mathbb{C})\simeq SL(2,\mathbb{C})$, one can think of the variables $y^\alpha$ and $\bar y^{\dot\alpha}$ as the left- and right-handed Weyl spinors of the $SL(2,\mathbb{C})$ group. Furthermore, the associated Lie algebra $L(\mathfrak{A})$ contains a finite-dimensional subalgebra generated by all real quadratic combinations of $y$'s and $\bar y$'s. This subalgebra is isomorphic to the Lie algebra $sp(2,\mathbb{R})\simeq so(3,2)$, i.e., the Lie algebra of isometries of $4d$ anti-de Sitter space. To some extent this explains the relevance of the algebra $\mathfrak{A}$ to $4d$ higher-spin field theories.

Theorem \ref{AFLS2} gives us information about the dimensions of the cohomology spaces $HH^\bullet(\mathfrak{A}, \mathfrak{A})$. We have  $G=\{1,\kappa,\bar\kappa, \kappa\bar\kappa\}$ and
$$S_0=\{1\}\,, \quad S_1=\{\varnothing\}\,, \quad S_2=\{\kappa,\bar\kappa\}\,,\quad S_3=\{\varnothing\}\,,\quad S_4=\{\kappa\bar\kappa\}\,.$$
Since the group $G$ is abelian, we immediately obtain
$$
HH^0(\mathfrak{A},\mathfrak{A})\simeq \mathbb{C}\,,\qquad HH^2(\mathfrak{A},\mathfrak{A})\simeq \mathbb{C}^2\,,\qquad HH^4(\mathfrak{A},\mathfrak{A})\simeq \mathbb{C}\,,
$$
and the other groups vanish.

As it usually is in deformations theory, the second cohomology group $HH^2(\mathfrak{A},\mathfrak{A})$ admits a direct physical interpretation: The corresponding basis cocycles generate the pair of cubic vertices in the Vasiliev equations for massless higher-spin fields \cite{Vasiliev:1988sa}.
To find explicit expressions for these vertices, we can use the Vasiliev resolution (\ref{VR3}) resulting in the isomorphism (\ref{iso}). By Proposition \ref{prop1} the group  $H_d^\bullet H^0_\partial(\mathfrak{A},\hat\Omega^\bullet\rtimes G)$ is generated by the forms
$$
1,\quad
e^{2i z_\alpha y^\alpha}\kappa dz_\alpha\wedge dz^\alpha,\quad  e^{2i\bar z_{\dot\alpha}\bar y^{\dot\alpha}}\bar \kappa d\bar z_{\dot\alpha}\wedge d\bar z^{\dot\alpha},\quad
e^{2i(z_\alpha y^\alpha+\bar z_{\dot\alpha}\bar y^{\dot\alpha})}\kappa\bar\kappa dz_\alpha\wedge dz^\alpha\wedge d\bar z_{\dot \alpha}\wedge d\bar z^{\dot\alpha},
$$
where $z_\alpha=z^\beta\epsilon_{\beta\alpha}$ and $\bar z_{\dot\alpha}=\bar z^{\dot\beta}\epsilon_{\dot\beta\dot\alpha}$.
Applying now the operator $(\partial_1s)^p$ for $p=0,2,4$ gives the Hochschild cocycles of $\mathfrak{A}$. In view of the property (\ref{tt}) these cocycles are completely determined
by their values on the subalgebra $A_2\in \mathfrak{A}$. The Weyl algebra $A_2$, being isomorphic to $A_1\otimes A_1$, is spanned by the polynomials $c(y,\bar y)=a(y)b(\bar y)$. We have
$$
\begin{array}{l}
\vartheta_0=1\,,\\[3mm]
\vartheta_2(a_1b_1, a_2b_2)=\tau_2(a_1,a_2)(b_1\ast b_2) \kappa\,,\\[3mm]
\bar\vartheta_2(a_1 b_1, a_2b_2)=(a_1\ast a_2)\tau_2(b_1,b_2) \bar\kappa\,,\\[3mm]
\vartheta_4 (c_1,c_2, c_3,c_4)=\tau_4(c_1,c_2,c_3,c_4)\kappa\bar\kappa\,,
 \end{array}
$$
where $\tau_2$ and $\tau_4$ are the FFS cocycles defined by Eq. (\ref{FFS}). In this form the cocycles $\vartheta_2$ and $\bar\vartheta_2$ were first found  in \cite{Vasiliev:1988sa}.

\end{example}

\subsection*{Acknowledgments} We are grateful to Vasiliy Dolgushev and Boris Feigin for useful discussions.  We also acknowledge a kind hospitality at the program ``Higher Spin Theory and Duality" (Munich, May 2-27, 2016) organized by the Munich Institute for Astro- and Particle Physics (MIAPP).

\appendix
\section{Geometrical Interpretation of the FFS Cocycle}\label{A}
By making change of integration variables, one can bring the symbol of the FFS cocycle (\ref{symb}) into the form
$$
   \hat\tau_{2n}(p_0,p_1,\ldots,p_{2n})= \int_{\mathbb{R}^{2n}} d^{2n}v\,\exp{i\left[2\omega(v, p_0+\cdots+p_{2n}) +\sum_{0\leq i<j\leq 2n} \omega(p_i, p_j)\right]}
   $$

   $$
   \times \Delta(v+p_0,v+p_0+p_1,\ldots,v+p_0+\ldots+p_{2n})\,,
$$
where $\Delta(v_0,\ldots,v_{2n})$
 is the characteristic function of the  oriented $2n$-simplex $\Sigma\subset \mathbb{R}^{2n}$ spanned by the vectors $v_0,v_1,\ldots, v_{2n}\in \mathbb{R}^{2n}$.
 By definition, the function $\Delta$ assumes only three different values $0, \pm 1$. The value $0$ says that the origin $0\in \mathbb{R}^{2n}$ lies outside the simplex $\Sigma$.   Otherwise $\Delta$ takes on value $1$ for the right simplices  and $-1$ for the left ones.
It follows from the definition that
\begin{itemize}
 \item $\Delta$ is totally anti-symmetric under permutations of its arguments;
    \item $\Delta$ is $Sp(n,\mathbb{R})$-invariant, i.e., any linear symplectic transform $v_i\rightarrow A v_i$ leaves it intact;
\item the Hochschild cocycle condition is equivalent to the identity
\begin{equation}\label{tid}
        \sum_{k=0}^{2n+1} (-1)^k\Delta(v_0,\ldots,\hat{v}_k,\ldots,v_{2n+1})=0\,.
\end{equation}
\end{itemize}

Geometrically, the identity expresses a simple fact that any polytope with $2n+2$ vertices in $\mathbb{R}^{2n}$ can be
cut into $2n+2$ simplices. Then the origin  $0\in \mathbb{R}^{2n}$ belongs to an even number of such simplices with appropriate sign factors and orientations ensuring pairwise cancelation.

From the algebraic viewpoint Eq.(\ref{tid}) means that the function $\Delta$ is a $2n$-cocycle of the Alexander-Spanier complex of $\mathbb{R}^{2n}$ with coefficients in $\mathbb{Z}$, see e.g. \cite{massey1978homology}.  The action of the coboundary operator on $p$-cochains is given by
$$
(\partial \varphi)(v_0,\ldots,v_{p+1})=\sum_{k=0}^{p+1} (-1)^k\varphi (v_0,\ldots,\hat{v}_k,\ldots,v_{p+1})\,.
$$
Notice that $\mathrm{supp}\,{\Delta}=\{0\}$, so that the cocycle $\Delta$ is compactly supported. The cocycle $\Delta$ can be formally represented as a coboundary, namely,
$$
\Delta =\partial \Delta_w \,,\qquad \Delta_w(v_0,\ldots, v_{2n-1})=\Delta(w, v_0,\ldots,v_{2n-1})\,.
$$
The cochain $\Delta_w$, however, is neither $Sp(n,\mathbb{R})$-invariant nor compactly supported. It is easy to see that for all nonzero $w\in \mathbb{R}^{2n}$, $\mathrm{supp}\, \Delta_w\simeq \mathbb{R}_+$. Actually, the $2n$-cocycle $\Delta$ generates the Alexander-Spanier cohomology  $H_c^{\bullet}(\mathbb{R}^{2n},\mathbb{Z})\simeq \mathbb{Z}$.

In the  special case of $A_1$ the identity (\ref{tid}) was first revealed in \cite{Vasiliev:1989xz}.

\providecommand{\href}[2]{#2}\begingroup\raggedright\endgroup

\end{document}